\newtheorem{theorem}{Theorem}
\newtheorem{definition}[theorem]{Definition}
\newtheorem{corollary}[theorem]{Corollary}
\newcommand{\mc}{\mathcal}
\let\oldnl\nl
\newcommand{\nonl}{\renewcommand{\nl}{\let\nl\oldnl}}
\title{Three-dimensional matching is NP-Hard}
\author{\normalsize{Shrinu Kushagra}\footnote{Email: skushagr@uwaterloo.ca. The results in this note first appeared in \cite{kushagra19semisupervised} where the reduction was used to prove query lower bounds for a clustering problem.}}
\date{}
\begin{document}
\maketitle
\begin{abstract}
The standard proof of NP-Hardness of 3DM provides a power-$4$ reduction of 3SAT to 3DM. In this note, we provide a linear-time reduction. Under the exponential time hypothesis, this reduction improves the runtime lower bound from $2^{o(\sqrt[4]{m})}$ (under the standard reduction) to $2^{o(m)}$. \\

\noindent\textbf{Keywords:} 3DM, 3SAT, ETH, X3C, NP-Hard

\end{abstract}

\section{Introduction}
In this note, we first establish the hardness of the following decision problem. 

\vspace{10pt}\begin{definition}[3DM]\textcolor{white}{.}

\vspace{5pt}\noindent Input: Sets $W, X$ and $Y$ and a set of matches $M \subseteq W \times X \times Y$ of size $m$.

\vspace{5pt}\noindent Output: YES if there exists $M' \subseteq M$ such that each element of $W, X, Y$ appears exactly once in $M'$. NO otherwise. 
\end{definition}

\vspace{10pt}\noindent To prove that 3DM is NP-Hard, we reduce an instance of 3SAT to the given problem. Next, we define the 3SAT decision problem. 

\vspace{10pt}\begin{definition}[3-SAT]\textcolor{white}{.}

\vspace{5pt}\noindent Input: A boolean formulae $\phi$ in 3CNF form with $n$ literals and $m$ clauses.

\vspace{5pt}\noindent Output: YES if $\phi$ is satisfiable, NO otherwise. 
\end{definition}

\vspace{10pt}\noindent Given an instance of 3SAT with $n$ literals and $m$ clauses, \cite{garey1979computers} construct a graph with {$\Theta(nm)$ vertices} and $\Theta(n^2 m^2)$ edges. Thus, this is a power-$4$ reduction. In this note, we use a similar but a more efficient gadget and provide a linear time reduction of the 3SAT instance to the given problem. 

\section{Hardness of 3DM}
\label{section:3dmhardness}
\begin{theorem}
Three-dimensional matching is an NP-Hard problem. 
\end{theorem}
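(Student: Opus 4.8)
The plan is to exhibit a linear-time many-one reduction from 3SAT. Given a 3CNF formula $\phi$ with variables $x_1,\dots,x_n$ and clauses $c_1,\dots,c_m$, I will build sets $W,X,Y$ together with a triple-set $M\subseteq W\times X\times Y$ with $|M|=O(m)$ such that $\phi$ is satisfiable if and only if the resulting 3DM instance has a perfect matching $M'$. The design principle is the usual one: each occurrence of a literal in $\phi$ gets a dedicated ``tip'' element, and the gadgetry is arranged so that a tip can be consumed by a clause gadget precisely when the corresponding literal is set to true.

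Three families of gadgets do the work. (i) \emph{Truth-setting gadgets}: for each variable $x_i$ a cyclically arranged bundle of triples over fresh private elements, engineered so that those private elements can be covered in exactly two ways — one that leaves available the tips of the positive occurrences of $x_i$ (the ``true'' state) and one that leaves available the tips of the negative occurrences (the ``false'' state). (ii) \emph{Satisfaction-testing gadgets}: for each clause $c_k$ two fresh private elements $a_k\in X$, $b_k\in Y$ and three triples, the $j$-th passing through the tip of the $j$-th literal of $c_k$; covering $\{a_k,b_k\}$ forces exactly one of these triples, hence forces some tip of $c_k$ to be available, i.e.\ forces $c_k$ to be satisfied. (iii) \emph{Clean-up gadgets}: extra triples that absorb every tip which is available but not consumed by a clause gadget, so that $M'$ covers every element exactly once. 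The whole novelty over the classical argument of \cite{garey1979computers} is in (iii): rather than wiring every tip to every one of the $\Theta(m)$ global clean-up elements (already $\Theta(m^2)$ triples), I attach an $O(1)$-size clean-up sub-gadget to each variable gadget. This is possible after lightly normalizing $\phi$ (bounding the number of occurrences of each variable, or passing to a one-in-three variant) so that the set of available-but-unused tips contributed by $x_i$ depends only on the state of $x_i$ — two possibilities — rather than on the global assignment. The outcome is $|M|=O(m)$ and a map computable in linear time.

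Correctness splits into the two expected implications. Forward: given a satisfying assignment $\beta$, put each variable gadget into state $\beta(x_i)$, in each clause select one literal made true by $\beta$ and route the clause gadget through that literal's tip, and use the clean-up sub-gadgets on the remaining available tips; a short count shows this $M'$ covers $W\cup X\cup Y$ exactly once. Converse: from a perfect matching $M'$ one reads off $\beta(x_i)$ as the state in which variable gadget $i$ is covered — the cyclic structure guarantees $M'$ restricted to that gadget is one of the two admissible states and nothing ``mixed'' — then observes that covering $\{a_k,b_k\}$ forces a tip of $c_k$ to be available, hence a literal of $c_k$ true under $\beta$, so $\beta\models\phi$.

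The step I expect to be the main obstacle is getting the clean-up right simultaneously on three fronts: it must (a) absorb exactly the available-but-unused tips in \emph{every} legal configuration, (b) stay $O(1)$ per variable so the reduction remains linear, and (c) introduce \emph{no} spurious perfect matching — in particular it must not permit a variable gadget to be covered in a mixed state, and must not permit $\{a_k,b_k\}$ to be covered without spending a tip. Verifying that the cyclic truth-setting bundle together with the local clean-up enforces a clean exclusive-or between the ``true'' and ``false'' states, with no orphaned and no doubly covered element, is the delicate part; once that invariant is nailed down, both directions of the equivalence and the $O(m)$ size bound are routine bookkeeping.
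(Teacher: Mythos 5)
Your overall architecture (cyclic truth-setting gadgets, a per-clause pair of private elements forcing one triple through a true literal's tip, linear size) matches the paper's, but the component you yourself flag as the crux --- gadget (iii), the clean-up --- is exactly where the proposal has a genuine gap, and your proposed fix does not work as stated. You want a per-variable, $O(1)$-size absorber, justified by the claim that after ``lightly normalizing $\phi$'' the set of available-but-unused tips contributed by $x_i$ depends only on the truth value of $x_i$. That claim is false for 3SAT even with bounded occurrences: if $x_i$ is true and occurs positively in several clauses, each such clause may or may not spend $x_i$'s tip (it can route through another true literal instead), so which of $x_i$'s tips are left over is a global choice, not a function of $x_i$'s state. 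Bounding occurrences does nothing to remove this choice; forcing it away essentially requires exactly-one-in-three semantics, and then you would still owe (a) a linear-time reduction to that variant, (b) a concrete two-mode absorber gadget that can swallow either the ``positive-leftover'' set or the ``negative-leftover'' set without creating mixed or spurious perfect matchings, and (c) the verification that the converse direction recovers an assignment with the right semantics. None of this is supplied, so the proof is not complete.

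The paper sidesteps the whole difficulty by making the clean-up \emph{per clause}, not per variable: since the truth-setter consumes exactly one tip of each occurrence, every clause has exactly three leftover tips regardless of the assignment; one is taken by the pair $(t_1[j],t_1'[j])$, which is wired only to tips that are free when the corresponding literal is true (this enforces satisfaction), and the remaining two are absorbed by $(tf_1[j],tf_1'[j])$ and $(tf_2[j],tf_2'[j])$, each wired to \emph{both} tips of all three occurrences in that clause. Because the leftover count per clause is assignment-independent, no normalization of $\phi$ and no delicate per-variable invariant is needed, and the instance size is $21m$ triples outright. If you want to salvage your route, the honest options are either to adopt this per-clause garbage collection, or to actually carry out the one-in-three normalization and exhibit the absorber gadget with a full no-spurious-matching argument --- as it stands, the central gadget of your reduction is missing.
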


\begin{figure}[!ht]
	\centering
	\includegraphics[trim = 120 500 250 120, clip, width=\linewidth]{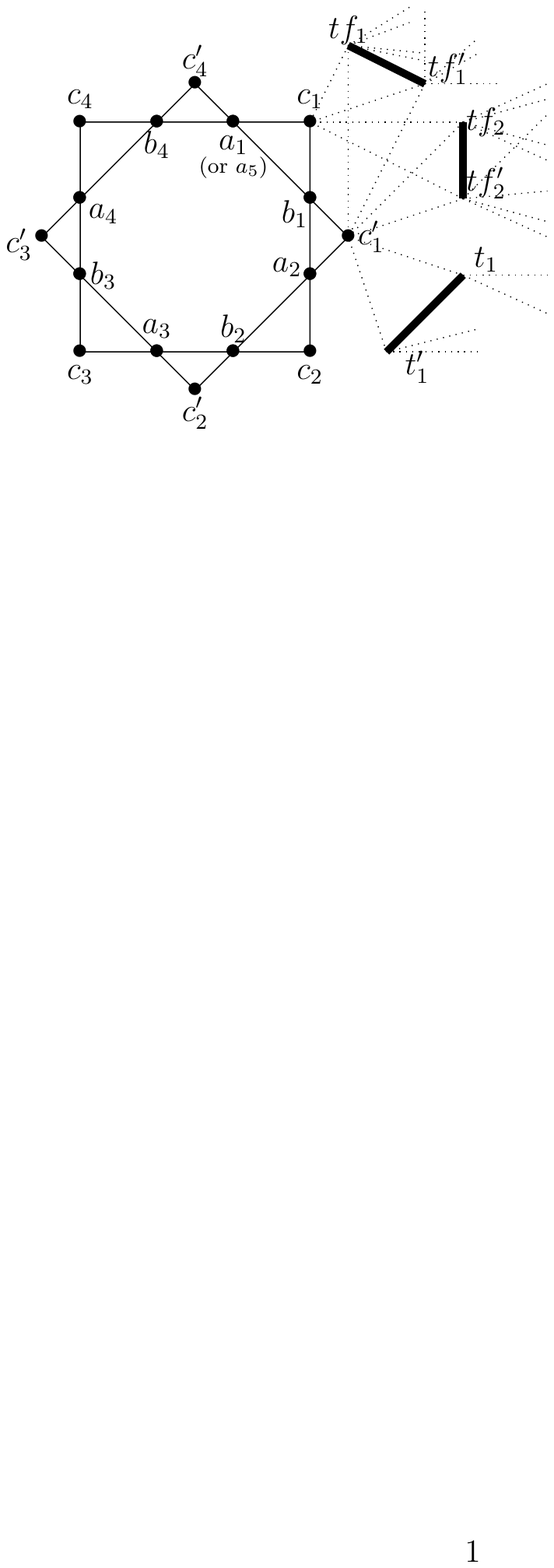}
  	\caption{Part of graph $G$ constructed for the literal $x_1$. The figure is an illustration for when $x_1$ is part of four different clauses. The triangles (or hyper-edge) $(a_i, b_i, c_i)$ capture the case when $x_1$ is true and the other triangle $(b_i, c_i', a_{i+1})$ captures the case when $x_1$ is false. Assuming that a clause $C_j = \{x_1, x_2, x_3\}$, the hyper-edges containing $tf_i, tf_i'$ and $t_1, t_1'$ capture different settings. The hyper-edges containing $t_1, t_1'$ ensure that atleast one of the literals in the clause is true. The other two ensure that two variables can take either true or false values.}
\label{fig:3DMQueries}
\end{figure}

Our reduction is described in Fig. \ref{fig:3DMQueries}. For each literal $x_i$, let $m_i$ be the number of clauses in which the the literal is present. We construct a ``truth-setting" component containing $2m_i$ hyper-edges (or triangles). We add the following hyper-edges to $M$.
\begin{align*}
  &\{(a_k[i], b_k[i], c_k[i]): 1 \le k \le m_i\}\\
  &\cup \{(a_{k+1}[i], b_k[i], c_k'[i]): 1 \le k \le m_i\}
\end{align*}
Note that one of $(a_k, b_k, c_k)$ or $(a_{k+1}, b_k, c_k')$ have to be selected in a matching $M'$. If the former is selected, that corresponds to the variable $x_i$ being assigned true, the latter corresponds to false. This part is the same as the standard construction. 

For every clause $C_j = \{x_1, x_2, x_3\}$ we add three types of hyper-edges.  The first type ensures that atleast one of the literals is true. 
$$\{(c_k[i], t_1[j], t_1'[j]): x_i' \in C_j\} \cup \{(c_k'[i], t_1[j], t_1'[j]): x_i \in C_j\}$$ 
The other two types of hyper-edges (conected to the $tf_i$'s) say that two of the literals can be either true or false. Hence, we connect them to both $c_k$ and $c_k'$
\begin{align*}
  &\{(c_k[i], tf_1[j], tf_1'[j]): x_i' \text{ or }x_i\in C_j\}\\
  &\cup \{(c_k[i], tf_2[j], tf_2'[j]): x_i \text{ or }x_i' \in C_j\}\\
  &\cup \{(c_k'[i], tf_1[j], tf_1'[j]): x_i' \text{ or }x_i\in C_j\}\\
  &\cup \{(c_k'[i], tf_2[j], tf_2'[j]): x_i \text{ or }x_i' \in C_j\}
\end{align*}
Note that in the construction $k$ refers to the index of the clause $C_j$ in the truth-setting component corresponding to the literal $x_i$. Using the above construction, we get that
\begin{align*}
  & W = \{c_k[i], c_k'[i]\}\\
  & X = \{a_k[i]\} \cup \{t_1[j], tf_1[j], tf_2[j]\}\\
  & Y = \{b_k[i]\} \cup \{t_1'[j], tf_1'[j], tf_2'[j]\}
\end{align*} 
Hence, we see that $|W| = 2\sum_i m_i = 6m$. Now, $|X| = |Y| = \sum_i m_i + 3m = 6m$. And, we have that $|M| = 2\sum_i m_i + 15m = 21m$. Thus, we see that this construction is linear in the number of clauses. 

Now, if the 3-SAT formula $\phi$ is satisfiable then there exists a matching $M'$ for the 3DM problem. If a variable $x_i = T$ in the assignment then add $(c_k[i], a_k[i], b_k[i])$ to $M'$ else add $(c_k'[i], a_{k+1}[i], b_k[i])$. For every clause $C_j$, let $x_i$ (or $x_i'$) be the variable which is set to true in that clause. Add $(c_k'[i], t_1[j], t_1'[j])$  (or $(c_k[i], t_1[j], t_1'[j])$) to $M'$. For the remaining two clauses, add the hyper-edges containing $tf_1[j]$ and $tf_2[j]$ depending upon their assignments. Clearly, $M'$ is a matching. 

Now, the proof for the other direction is similar. If there exists a matching, then one of $(a_k, b_k, c_k)$ or $(a_{k+1}, b_k, c_k')$ have to be selected in a matching $M'$. This defines a truth assignment of the variables. Now, the construction of the clause hyper-edges ensures that every clause is satisfiable.

\section{Exponential Time Hypothesis for 3DM}
Before we start the discussion in the section, lets review the definition of the exponential time hypothesis.\\

\vspace{0pt}\noindent\textbf{Exponential Time Hypothesis (ETH)}\\
There does not exist an algorithm which decides 3-SAT  and runs in $2^{o(m)}$ time.\\

\vspace{0pt}\noindent If the exponential hypothesis is true, the standard reduction of 3-SAT to 3DM \cite{garey1979computers} implies that any algorithm for 3DM runs in $2^{o(m^{1/4})}$. However, using the reduction in Section \ref{section:3dmhardness}, we get a more tighter dependence on $m$ stated as a theorem below. 

\begin{theorem}
If the exponential time hypothesis holds then there does not exist an algorithm which decides the three-dimensional matching problem (3DM) and runs in time $2^{o(m)}$.
\end{theorem}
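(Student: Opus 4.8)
The plan is to prove the contrapositive: a sub-exponential algorithm for 3DM would yield a sub-exponential algorithm for 3SAT, contradicting ETH. So suppose, towards a contradiction, that there is an algorithm $\mathcal{A}$ that decides 3DM in time $2^{o(m)}$, where (as in the definition of 3DM) $m = |M|$ is the number of matches in the input. I will build from $\mathcal{A}$ an algorithm that decides 3SAT in time $2^{o(m)}$ in the number of clauses.

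First I would take an arbitrary 3CNF formula $\phi$ with $m$ clauses and run the reduction of Section \ref{section:3dmhardness} on it, obtaining a 3DM instance $(W,X,Y,M)$. Two properties of that reduction, both already established in Section \ref{section:3dmhardness}, are all that is needed: correctness, i.e.\ $\phi$ is satisfiable if and only if $(W,X,Y,M)$ is a YES instance of 3DM; and linear size, i.e.\ $|M| = 21m$ and the instance is produced in $O(m)$ time. I then run $\mathcal{A}$ on $(W,X,Y,M)$ and return its verdict; correctness of the reduction makes this a correct decision procedure for 3SAT.

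Next I would account for the running time. Writing down $(W,X,Y,M)$ costs $O(m)$. Invoking $\mathcal{A}$ costs $2^{o(|M|)} = 2^{o(21m)} = 2^{o(m)}$, the last step being exactly where linearity of the reduction matters: multiplying the parameter by the constant $21$ leaves the class $2^{o(\cdot)}$ unchanged. Adding the two contributions, the whole procedure decides 3SAT in $O(m) + 2^{o(m)} = 2^{o(m)}$ time. This contradicts ETH as stated in this section, which already speaks of the number of clauses, so no invocation of the Sparsification Lemma is required. Hence no algorithm $\mathcal{A}$ of the assumed kind can exist, which is the claim.

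I do not expect a genuine obstacle in this argument; the one thing to watch is the bookkeeping of size parameters. The transfer is lossless only because the reduction is linear: had the output had size $m \cdot \mathrm{polylog}\, m$ or $m^{1+\varepsilon}$, the same reasoning would only rule out running times like $2^{o(m / \mathrm{polylog}\, m)}$ or $2^{o(m^{1/(1+\varepsilon)})}$, which is precisely the weaker conclusion one gets from the classical power-$4$ reduction. So the crux, already carried out in Section \ref{section:3dmhardness}, is the linear bound $|M| = 21m$; the present theorem is then essentially a formal consequence.
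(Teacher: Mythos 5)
Your proposal is correct and follows exactly the paper's argument: assume a $2^{o(m)}$ algorithm for 3DM, compose it with the linear reduction of Section \ref{section:3dmhardness} (where $|M| = 21m$), and obtain a $2^{o(m)}$ algorithm for 3SAT, contradicting ETH. You merely spell out the bookkeeping (correctness of the reduction and the fact that the constant factor $21$ preserves the $2^{o(\cdot)}$ class) that the paper's one-line proof leaves implicit.
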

\begin{proof}
For the sake on contradiction, suppose that such an algorithm $\mc A$ exists. Then, using the reduction from Section \ref{section:3dmhardness} and $\mc A$, we get an algorithm for 3SAT that runs in $2^{o(m)}$ time which contradicts the ET hypothesis.  
\end{proof}

\vspace{10pt}\noindent An immediate corollary of this result applies to another popular problem Exact Cover by 3-sets.
\vspace{0pt}\begin{definition}[X3C]\textcolor{white}{.}

\vspace{5pt}\noindent Input: $U = \{x_1, \ldots, x_{3q}\}$. A collections of subsets $S = \{S_1, \ldots, S_m\}$ such that each $S_i \subset U$ and contains exactly three elements.

\vspace{5pt}\noindent Output: YES if there exist $S' \subseteq S$ such that each element of $U$ occurs exactly once in $S'$, NO otherwise. 
\end{definition}

\vspace{5pt}\begin{corollary}
\label{cor:X3CLowerBound}
If the exponential time hypothesis holds then there does not exist an algorithm which decides exact cover by 3-sets problem (X3C) and runs in time $2^{o(m)}$.
\end{corollary}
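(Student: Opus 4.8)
The plan is to reduce 3DM to X3C in linear time, so that a $2^{o(m)}$ algorithm for X3C would yield a $2^{o(m)}$ algorithm for 3DM, contradicting the previous theorem. The reduction is essentially the observation that 3DM is the special case of X3C in which the ground set is partitioned into three "color classes" $W, X, Y$ and every 3-set has one element from each class. So given a 3DM instance $(W, X, Y, M)$ with $|M| = m$, I would set $U := W \sqcup X \sqcup Y$ (disjoint union) and $S := \{\{w,x,y\} : (w,x,y) \in M\}$, which has $|S| = m$ sets. One must check that $|U|$ is a multiple of $3$: if $|W| = |X| = |Y|$ this is automatic; in the instances produced by Section \ref{section:3dmhardness} we have $|W| = |X| = |Y| = 6m$, so this holds, and in general one can pad the smaller classes with dummy elements and add dummy 3-sets if needed (without changing the asymptotics). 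Clearly $|S|$ and $|U|$ are linear in $m$, and the construction takes linear time.

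Next I would argue correctness in both directions. If $M' \subseteq M$ is a perfect 3D matching, then the corresponding sub-collection $S' \subseteq S$ covers each element of $W$, each element of $X$, and each element of $Y$ exactly once, hence covers $U$ exactly once, so $(U, S)$ is a YES instance of X3C. Conversely, if $S' \subseteq S$ is an exact cover of $U$, then because each set in $S$ contains exactly one element of $W$, one of $X$, and one of $Y$, the corresponding $M' \subseteq M$ has each element of $W, X, Y$ appearing exactly once, so it is a valid 3D matching. Thus YES maps to YES and NO maps to NO.

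Finally I would assemble the contradiction: suppose an algorithm $\mc B$ decides X3C in time $2^{o(m)}$. Given a 3SAT formula $\phi$ with $m$ clauses, apply the reduction of Section \ref{section:3dmhardness} to obtain a 3DM instance with $|M| = 21m$ matches, then apply the reduction above to obtain an X3C instance with $|S| = 21m$ sets, and run $\mc B$. The total running time is $2^{o(m)}$, deciding 3SAT, which contradicts ETH.

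The only real subtlety — and hence the main thing to get right — is the divisibility condition $3 \mid |U|$ in the definition of X3C and, relatedly, making sure the padding (if invoked) does not accidentally create or destroy exact covers; everything else is the immediate "3DM is colored X3C" correspondence.
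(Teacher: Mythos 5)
Your proof is correct and follows the same route as the paper, which also reduces 3DM to X3C via $U = W \cup X \cup Y$ and $S = M$; you simply spell out the colored-X3C correspondence, the two correctness directions, and the divisibility-by-3 check that the paper leaves implicit.
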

\begin{proof}
The proof follows from the trivial reduction of 3DM to X3C where $U = W \cup X \cup Y$ and $S = M$. 
\end{proof}

\vspace{20pt}
\bibliography{3dm}
\bibliographystyle{apalike}
\end{document}